\def\eqref#1{equation~\ref{#1}}
\def\1{\bm{1}}
\DeclareMathAlphabet{\mathsfit}{\encodingdefault}{\sfdefault}{m}{sl}
\SetMathAlphabet{\mathsfit}{bold}{\encodingdefault}{\sfdefault}{bx}{n}
\title{Denoising Diffusion Gamma Models}
\author{
Eliya Nachmani* \\
Tel-Aviv University \\
Facebook AI Research \\
\texttt{enk100@gmail.com} \\
\And 
Robin San Roman* \\
École Normale Supérieure Paris-Saclay \\
\texttt{sanroman.robin@gmail.com} \\  
\And
Lior Wolf\\
Tel-Aviv University\\
\texttt{wolf@cs.tau.ac.il} \\  
}
\begin{document}

\maketitle
{\let\thefootnote\relax\footnote{{*Equal contribution}}}
\begin{abstract}
  Generative diffusion processes are an emerging and effective tool for image and speech generation. In the existing methods, the underlying noise distribution of the diffusion process is Gaussian noise. However, fitting distributions with more degrees of freedom could improve the performance of such generative models. In this work, we investigate other types of noise distribution for the diffusion process. Specifically, we introduce the Denoising Diffusion Gamma Model (DDGM) and show that noise from Gamma distribution provides improved results for image and speech generation. Our approach preserves the ability to efficiently sample state in the training diffusion process while using Gamma noise. 
\end{abstract}

\section{Introduction}
Deep generative neural networks have shown significant progress over the last years. The main architectures for generation are: (i) VAE \citep{kingma2013auto} based, for example, NVAE \citep{vahdat2020nvae} and VQ-VAE \citep{razavi2019generating}, (ii) GAN \citep{goodfellow2014generative} based, for example, StyleGAN \citep{karras2020analyzing} for vision application and WaveGAN \citep{donahue2018adversarial} for speech
(iii) Flow-based, for example Glow \citep{kingma2018glow} (iv) Autoregessive, for example, Wavenet for speech \citep{oord2016wavenet} and (v) Diffusion Probabilistic Models \citep{sohl2015deep}, for example, Denoising Diffusion Probabilistic Models (DDPM) \citep{ho2020denoising} and its implicit version DDIM \citep{song_denoising_2020}.

Models from this last family have shown significant progress in generation capabilities in the last years, e.g., \citep{chen_wavegrad_2020,kong_diffwave_2020}, and have achieved results comparable to state-of-the-art generation architecture for both images and speech. 

A DDPM is a Markov chain of latent variables. Two processes are modeled: (i) a diffusion process and (ii) a denoising process. During training, the diffusion process learns to transform data samples into Gaussian noise. Denoising is the reverse process and it is used during inference for generating data samples, starting from Gaussian noise. The second process can be conditioned on attributes to control the generation sample. To obtain high-quality synthesis, a large number of denoising steps is used (i.e. $1000$ steps). A notable property of the diffusion process is a closed-form formulation of the noise that arises from accumulating diffusion stems. This allows sampling arbitrary states in the Markov chain of the diffusion process without calculating the previous steps. 

In the Gaussian case, this property stems from the fact that adding Gaussian distributions leads to another Gaussian distribution. Other distributions have similar properties. For example, for the Gamma distribution, the sum of two distributions that share the scale parameter is a Gamma distribution of the same scale. The Poisson distribution has a similar property. However, its discrete nature makes it less suitable for DDPM. 

In DDPM, the mean of the distribution is set at zero. The Gamma distribution, with its two parameters (shape and scale), is better suited to fit the data than a Gaussian distribution with one degree of freedom (scale). Furthermore, the Gamma distribution generalizes other distributions, and many other distributions can be derived from it~\citep{leemis2008univariate}.

The added modeling capacity of the Gamma distribution can help speed up the convergence of the DDPM model. Consider, for example, a conventional DDPM model that was trained with Gaussian noise on the CelebA dataset~\citep{liu2015faceattributes}. 

The noise distribution throughout the diffusion process can be visualized by computing the histogram of the estimated residual noise in the generation process. The estimated residual noise $\hat{\epsilon}$ is given by $\hat{\epsilon} =\frac{\sqrt{\bar \alpha_t}x_0 - x_t}{\sqrt{1 - |\bar \alpha_t|}}$, where {$\bar \alpha_t$ is the noise schedule, $x_0$ is the data point and $x_t$ is the estimate state at timestep $t$}, as can be derived from Eq.4 from \citep{song_denoising_2020}. Both a Gaussian distribution and Gamma distribution can then be fitted to this histogram, as shown in Fig.~\ref{fig:fitting_error}(a,b). As can be seen, the Gamma distribution provides a better fit to the estimated residual noise $\hat{\epsilon}$. Moreover, Fig.~\ref{fig:fitting_error}(c) presents the mean fitting error between the histogram and the fitted probability distribution function. Evidently, the Gamma distribution is a better fit than the Gaussian distribution.

In this paper, we investigate the non-Gaussian Gamma noise distribution. The proposed models maintain the property of the diffusion process of sampling arbitrary states without calculating the previous steps. Our results are demonstrated in two major domains: vision and audio. In the first domain, the proposed method is shown to provide a better FID score for generated images. For speech data, we show that the proposed method improves various measures, such as Perceptual Evaluation of Speech Quality (PESQ) and short-time objective intelligibility (STOI).

\begin{figure}[]
    \centering
    \begin{tabular}{cccc}
    \includegraphics[width=.3\textwidth,keepaspectratio]{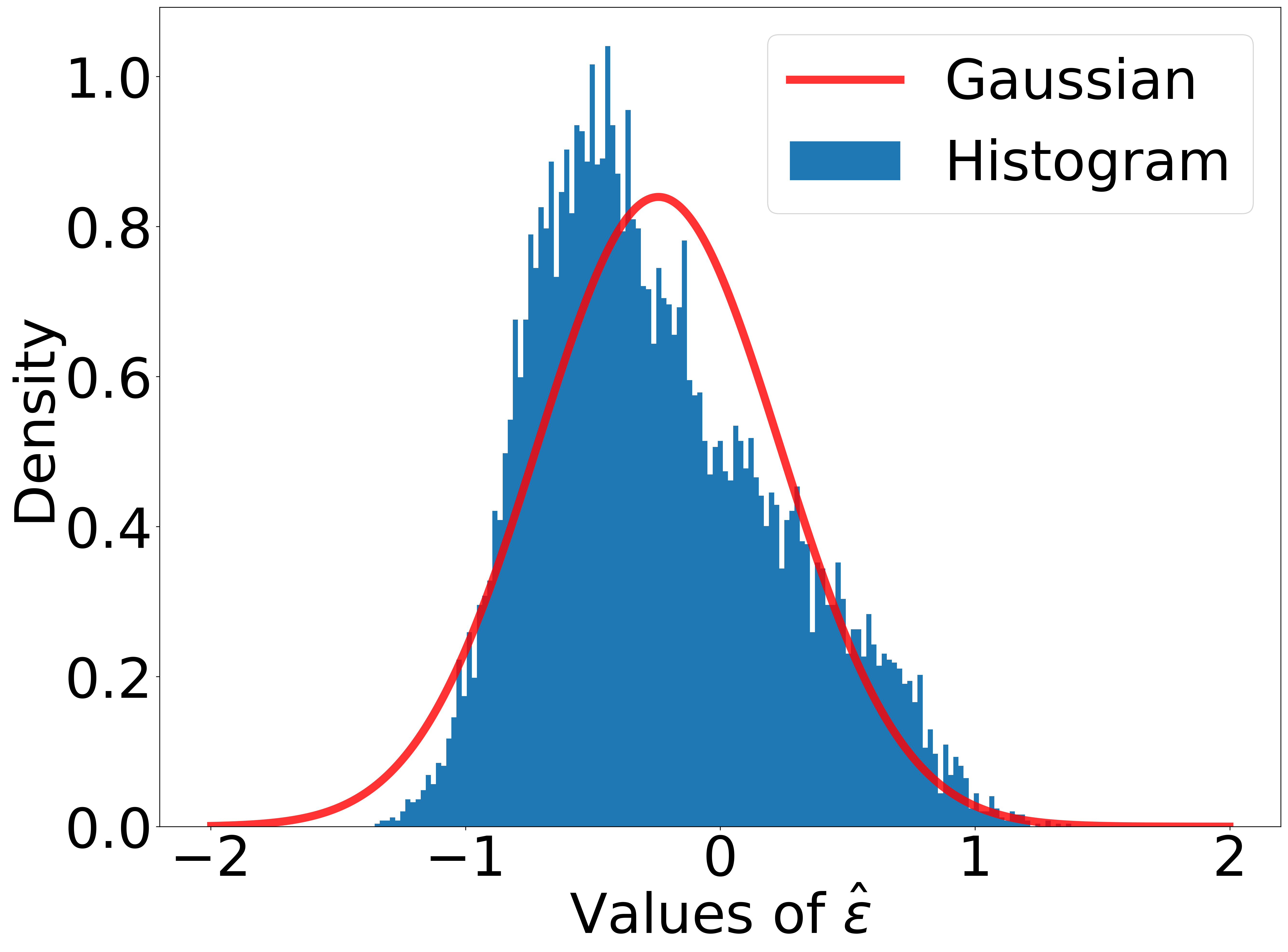} &
    \includegraphics[width=.3\textwidth,keepaspectratio]{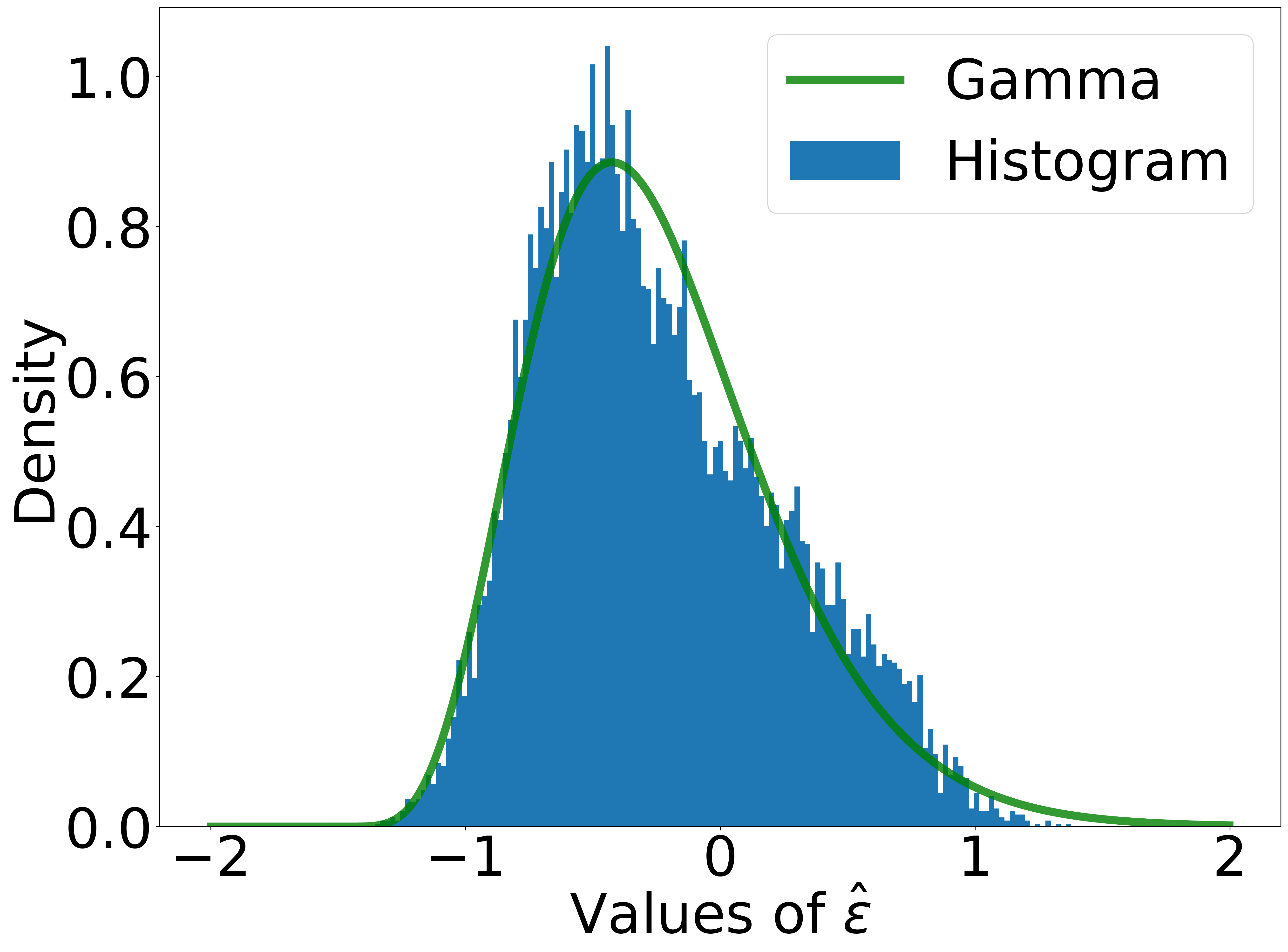} &
    \includegraphics[width=.3\textwidth,keepaspectratio]{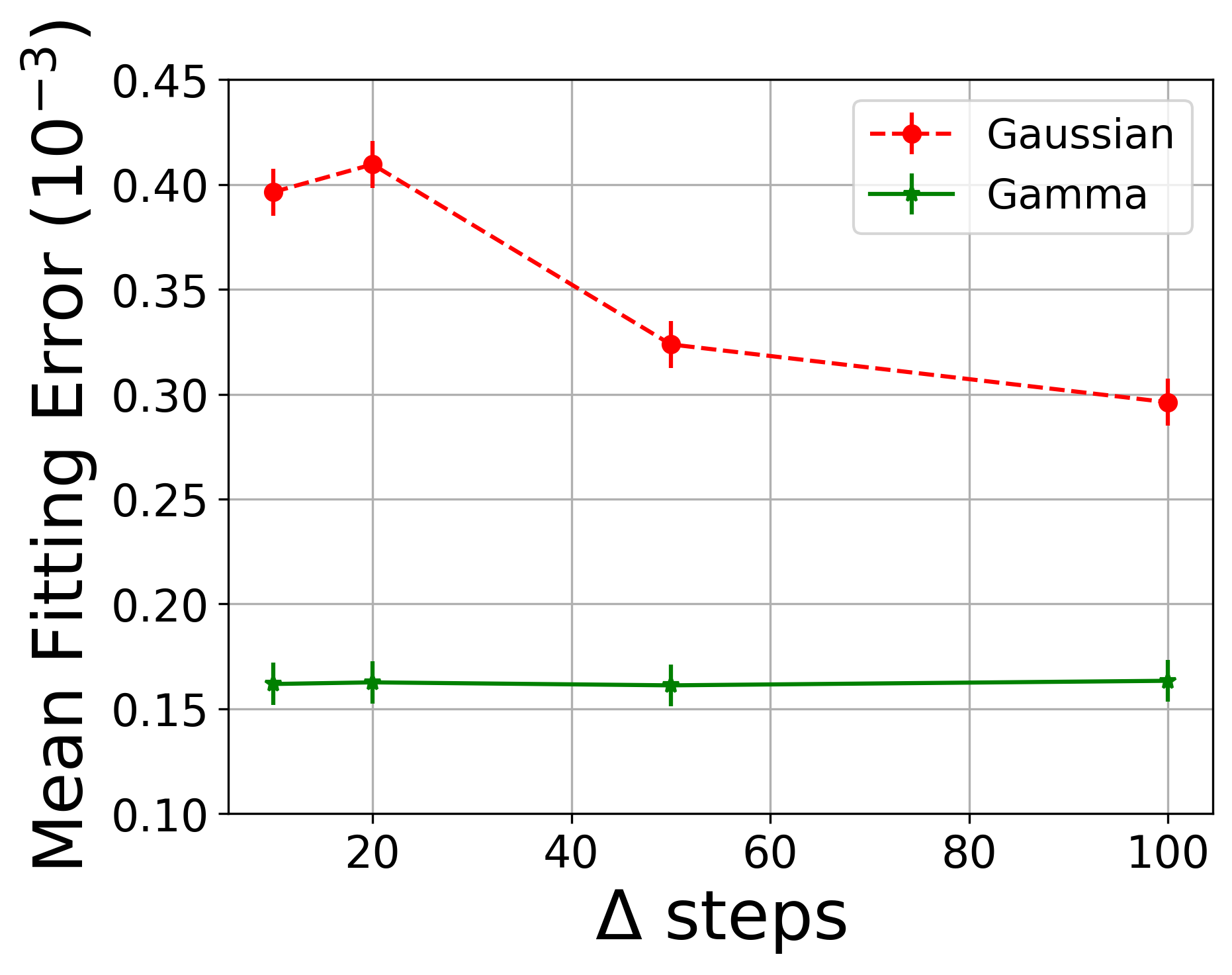} \\
    (a) & (b) & (c) \\
    \end{tabular}
    \caption{{Fitting a distribution to the histogram of the generation error, which given by the scaled difference between $x_0$ and the image $x_t$ after $t$ DDPM steps $\hat{\epsilon}=\frac{\sqrt{\bar \alpha_t}x_0 - x_t}{\sqrt{1 - |\bar \alpha_t|}}$. The model is a pretrained DDPM (Gaussian) celebA (64x64) model.} (a) The fitting of a Gaussian to the histogram of a typical image after $t-50$ steps. (b) Fitting a Gamma distribution. (c) The fitting error to Gaussian and Gamma distribution, measured as the MSE between the histogram and the fitted probability distribution function. Each point is the average value for the generation of $100$ images. The vertical error bars denote the standard deviation.}
    \label{fig:fitting_error}
\end{figure}

\section{Related Work} 
In their seminal work, \cite{sohl2015deep} introduce the Diffusion Probabilistic Model. This model is applied to various domains, such as time series and images. The main drawback in the proposed model is that it needs up to thousands of iterative steps to generate a valid data sample. \cite{song2019generative} proposed a diffusion generative model based on Langevin dynamics and the score matching method~\citep{hyvarinen2005estimation}. The model estimates the Stein score function~\citep{liu2016kernelized} which is the logarithm of data density. Given the Stein score function, the model can generate data points.

Denoising Diffusion Probabilistic Models (DDPM)~\citep{ho2020denoising} combine generative models based on score matching and neural Diffusion Probabilistic Models into a single model. Similarly, in \cite{chen_wavegrad_2020,kong2020diffwave} a generative neural diffusion process based on score matching was applied to speech generation. These models achieve state-of-the-art results for speech generation, and show superior results over well-established methods, such as Wavernn \citep{kalchbrenner2018efficient}, Wavenet \citep{oord2016wavenet}, and GAN-TTS \citep{binkowski2019high}.

Diffusion Implicit Models (DDIM) offer a way to accelerate the denoising process~\citep{song_denoising_2020}. The model employs a non-Markovian diffusion process to generate a higher quality sample. The model helps reduce the number of diffusion steps, e.g., from a thousand steps to a few hundred.

{\cite{dhariwal2021diffusion} find a better diffusion architecture through a series of exploratory experiments, leading to the Ablated Diffusion Model (ADM). This model outperforms the state-of-the-art in image synthesis, which was previously provided by GAN based-models, such as BigGAN-deep \citep{brock2018large} and StyleGAN2 \citep{karras2020analyzing}. 
ADM is further improved using a novel Cascaded Diffusion Model (CDM). Our contribution is fundamental and can be incorporated into the proposed ADM and CDM architectures.

\cite{watson2021learning} proposed an efficient method for sampling from diffusion probabilistic models by a dynamic programming algorithm that finds the optimal discrete time schedules. \cite{choi2021ilvr} introduces the Iterative Latent Variable Refinement (ILVR) method for guiding the generative process in DDPM. Moreover, \cite{kong2021fast} systematically investigates fast sampling methods for diffusion denoising models. \cite{lam2021bilateral} propose bilateral denoising diffusion models (BDDM), which take significantly fewer steps to generate high-quality samples.

\cite{huang2021variational} derive a variational framework for likelihood estimating of the marginal likelihood of
continuous-time diffusion models. Moreover, \cite{kingma2021variational} shows equivalence between various diffusion processes by using a simplification of the variational lower bound (VLB).
}

\cite{song2020score} show that score-based generative models can be considered a solution to a stochastic differential equation. \cite{gao2020learning} provide an alternative approach for training an energy-based generative model using a diffusion process.

Another line of work in audio is that of neural vocoders based on a denoising diffusion process. WaveGrad~\citep{chen_wavegrad_2020} and DiffWave~\citep{kong2020diffwave} are conditioned on the mel-spectrogram and produce high-fidelity audio samples, using as few as six steps of the diffusion process. These models outperform adversarial non-autoregressive baselines. \cite{popov2021grad} propose a text-to-speech diffusion base model, which allows generating speech with the flexibility of controlling the trade-off between sound quality and inference speed.

Diffusion models were also applied to natural language processing tasks. \cite{hoogeboom2021argmax} proposed a multinomial diffusion process for categorical data and applied it to language modeling. \cite{austin2021structured} generalize the multinomial diffusion process with Discrete Denoising Diffusion Probabilistic Models (D3PMs) and improve the generated results for the text8 and One Billion Word (LM1B) datasets.

\begin{figure}[!t]
\begin{minipage}[t]{0.49\textwidth}
\begin{algorithm}[H]
  \caption{DDPM training procedure.} 
  \label{alg:DDPM_train}
  \begin{algorithmic}[1]
    \STATE Input: dataset $d$, diffusion process length $T$, noise schedule $\beta_1,...,\beta_T$
   \REPEAT
   \STATE $ x_0 \sim d(x_0)$
   \STATE $t \sim \mathcal{U}(\{1, ..., T\})$
   \STATE $\varepsilon \sim \mathcal{N}(0, I)$
   \STATE $x_t = \sqrt{\bar \alpha_t}x_0 + \sqrt{1 - \bar \alpha_t}\varepsilon$
   \STATE Take gradient descent step on: \newline $\| \varepsilon - \varepsilon_{\theta}(x_t, t )\|_1 $
   \UNTIL{converged}
  \end{algorithmic}
\end{algorithm}
\end{minipage}
\hfill
\begin{minipage}[t]{0.49\textwidth}
\begin{algorithm}[H]
  \caption{DDPM sampling algorithm}%
  \label{alg:DDPM_inf}
  \begin{algorithmic}[1]
  \STATE $x_T \sim \mathcal{N}(0, I) $
   \FOR{t= T, ..., 1}
   \STATE $ z \sim \mathcal{N}(0, I)$
   \STATE $\hat \varepsilon = \varepsilon_\theta(x_t, t)$
   \STATE $x_{t-1} = \frac{x_t - \frac{1 - \alpha_t}{\sqrt{1 - \bar \alpha_t}}\hat \varepsilon }{\sqrt{\alpha_t}}$ 
   \IF{$t \neq 1$}
   \STATE $x_{t-1} = x_{t-1} + \sigma_t z$ 
   \ENDIF
   \ENDFOR
   \STATE \textbf{return} $x_0$
  \end{algorithmic}
\end{algorithm}
\end{minipage}
\end{figure}

\section{Diffusion models for Gamma Distribution}\label{section:method} 

We start by recapitulating the Gaussian case, after which we derive diffusion models for the Gamma distribution.

\subsection{Background - Gaussian DDPM} 
Diffusion networks learn the gradients of the data log density: 
\begin{equation}
    s(y) = \nabla_y \log p(y) 
\end{equation}
By using Langevin Dynamics and the gradients of the data log density $\nabla_y \log p(y)$, a sample procedure from the probability can be done by:
\begin{equation}\label{eq:langevin}
    \tilde y_{i+1} = \tilde y_i + \frac{\eta}{2} s(\tilde y_i) + \sqrt{\eta}z_i
\end{equation}
where $z_i\sim \mathcal{N}(0, I)$ and $\eta > 0$ is the step size.

The diffusion process in DDPM \citep{ho2020denoising} is defined by a Markov chain that gradually adds Gaussian noise to the data according to a noise schedule. The diffusion process is defined by:
\begin{equation}
    q(x_{1:T}|x_0) = \prod_{t=1}^{T} q(x_t|x_{t-1})\,,
\end{equation}
where T is the length of the diffusion process, and $x_T,...,x_t,x_{t-1},...,x_0$ is a sequence of latent variables with the same size as the clean sample $x_0$. 

The Diffusion process is parameterized with a set of parameters called noise schedule ($\beta_1, \dots \beta_T$), which defines the variance of the noise added at each step:
\begin{equation}
\label{eq:ddpm_diff_normal}
     q(x_t|x_{t-1}) := \mathcal{N}(x_{t}; \sqrt{1 - \beta_t}x_{t-1}, \beta_t\mathbf{I})\,,
\end{equation}

Since we are using a Gaussian noise random variable at each step, the diffusion process can be simulated for any number of steps with the closed formula: 
\begin{equation}\label{eq:closedy_n}
    x_t = \sqrt{\bar \alpha_t} x_0 + \sqrt{1 - \bar\alpha_t} \varepsilon\,,
\end{equation}

where $\alpha_i = 1 - \beta_i$, $\bar \alpha_t = \prod_{i=1}^t \alpha_i$ and $\varepsilon = \mathcal{N}(0,\mathbf{I})$.

Diffusion models are a class of generative neural network of the form $p_\theta(x_0)=\int p\theta(x_{0:T}) dx_{0:T}$ that learn to reverse the diffusion process. One can write that:

\begin{equation}
    p_\theta(x_0) = p(x_T)\prod_{t=1}^{T}p_\theta(x_{t-1}|x_t)
\end{equation}

As described in \citep{ho2020denoising}, one can learn to predict the noise present in the data with a network $\varepsilon_\theta$ and sample from $p_\theta(x_{t-1}|x_t)$ using the following formula :

\begin{equation}\label{eq:DDPM_update}
x_{t-1} = \dfrac{x_t - \frac{1 - \alpha_t}{\sqrt{1- \bar\alpha_t}}\varepsilon_\theta(x_t, t)}{\sqrt{\bar\alpha_t}} + \sigma_t \varepsilon\,,
\end{equation}

where $\varepsilon$ is white noise and $\sigma_t$ is the standard deviation of added noise. \citep{song_denoising_2020} use ${\sigma_t}^2 = \beta_t$.

The training procedure of $\varepsilon_{\theta}$ is defined in Alg.\ref{alg:DDPM_train}. Given the input dataset $d$, the algorithm samples $\epsilon$, $x_0$ and $t$. The noisy latent state $x_t$ is calculated and fed to the DDPM neural network $\varepsilon_\theta$. A gradient descent step is taken in order to estimate the $\varepsilon$ noise with the DDPM network $\varepsilon_\theta$.

The complete inference algorithm present at Alg.~\ref{alg:DDPM_inf}. Starting from Gaussian noise and then reversing the diffusion process step-by-step, by iteratively employing the update rule of Eq.~\ref{eq:DDPM_update}.


\subsection{Denoising Diffusion Gamma Models (DDGM)} 
We expand the framework of diffusion generative processes by incorporating a new noise distribution, namely the Gamma Distribution. We call this new type of models Denoising Diffusion Gamma Models. First, we define the Gamma diffusion process, then we present a way to sample from this process, and finally we show how to train those models by computing the variational lower bound and deriving a novel loss function from it.

\subsubsection{The Gamma Model}
In the Gaussian case the diffusion equation (Eq.~\ref{eq:ddpm_diff_normal}) can be written as:
\begin{equation}\label{eq:difusion_step}
	 x_t = \sqrt{1 - \beta_t} x_{t-1} + \sqrt{\beta_t}\epsilon_t
\end{equation}
where $\epsilon_t$ is the Gaussian noise of step $t$.
One can denote $\Gamma(k, \theta)$ as the Gamma distribution, where $k$ and $\theta$ are the shape and the scale respectively. We modify Eq.~\ref{eq:difusion_step} by adding, during the diffusion process, noise that follows a Gamma distribution:
\begin{equation}\label{eq:gamma_t}
	 x_t = \sqrt{1 - \beta_t} x_{t-1} + (g_t - \mathbb{E}(g_t))
\end{equation}
where $g_t\sim \Gamma(k_t, \theta_t)$, $\theta_t = \sqrt{\bar \alpha_t}\theta_0$ and $k_t=\dfrac{\beta_t}{\alpha_t{\theta_0}^2}$. Note that $\theta_0$ and $\beta_t$ are hyperparameters.

Since the sum of Gamma distribution (with the same scale parameter) is distributed as Gamma distribution, one can derive a closed form for $x_t$, i.e. an equation to calculate $x_t$ from $x_0$:
\begin{equation}
\label{eq:close_form_single_gamma}
 x_t = \sqrt{\bar \alpha_t} x_0 + (\bar g_t - \bar k_t\theta_t)
\end{equation}
where $\bar g_t \sim \Gamma(\bar k_t, \theta_t)$ and $\bar k_t = \sum_{i=1}^t k_i$. 

\begin{restatable}{lem}{primelemma}
\label{lm:lm1}
    Let $\theta_0 \in \mathbb{R}$, 
    Assuming $\forall t \in \{1,..., T\}$, $k_t=\dfrac{\beta_t}{\alpha_t{\theta_0}^2}$, 
    $\theta_t = \sqrt{\bar \alpha_t}\theta_0$, and $g_t\sim \Gamma(k_t, \theta_t)$. Then $\forall t \in \{1,..., T\}$ the following hold:
    
    \begin{equation}\label{eq:lemma2_1}
        E(g_t - E(g_t)) =0, V(g_t - E(g_t)) =\beta_t
    \end{equation}
    
    \begin{equation}\label{eq:lemma2_2}
        x_t = \sqrt{\bar \alpha_t}x_0 + (\bar g_t - E(\bar g_t))
    \end{equation}
    where $\bar g_t \sim \Gamma(\bar k_t, \theta_t)$ and $\bar k_t = \sum_{i=1}^t k_i$
\end{restatable}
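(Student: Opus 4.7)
I would split the lemma into its two displayed claims and handle them in order. Equation~\ref{eq:lemma2_1} is a direct one-line moment computation, while equation~\ref{eq:lemma2_2} calls for an induction on $t$ whose entire content is carried by two closure properties of the Gamma family: the scaling property ($cX \sim \Gamma(k,c\theta)$ when $X \sim \Gamma(k,\theta)$ and $c>0$) and the additivity of independent Gammas at a common scale.

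For equation~\ref{eq:lemma2_1} I would simply invoke the moments $E(g_t)=k_t\theta_t$ and $V(g_t)=k_t\theta_t^{2}$. Linearity of expectation immediately gives $E(g_t - E(g_t))=0$, and translation by a constant does not change variance, so $V(g_t - E(g_t))=V(g_t)=k_t\theta_t^{2}$. Substituting the definitions $k_t=\beta_t/(\alpha_t\theta_0^{2})$ and $\theta_t=\sqrt{\bar\alpha_t}\,\theta_0$ then reduces the variance claim to an algebraic identity in the noise schedule.

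For equation~\ref{eq:lemma2_2} I would induct on $t$. The base case $t=1$ is immediate: $\bar g_1=g_1$, $\bar k_1=k_1$, and $\sqrt{\bar\alpha_1}=\sqrt{1-\beta_1}$, so the statement coincides with equation~\ref{eq:gamma_t}. For the inductive step, I would substitute the hypothesis $x_{t-1}=\sqrt{\bar\alpha_{t-1}}\,x_0+(\bar g_{t-1}-E(\bar g_{t-1}))$ into equation~\ref{eq:gamma_t} to get
\begin{equation*}
x_t \;=\; \sqrt{\bar\alpha_t}\,x_0 \;+\; \sqrt{\alpha_t}\bigl(\bar g_{t-1}-E(\bar g_{t-1})\bigr) \;+\; \bigl(g_t-E(g_t)\bigr),
\end{equation*}
and it remains only to identify $\bar g_t := \sqrt{\alpha_t}\,\bar g_{t-1}+g_t$ as a $\Gamma(\bar k_t,\theta_t)$ variable, after which linearity of expectation collapses the centering terms into the single residual $\bar g_t-E(\bar g_t)$.

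The only nontrivial point, and the place I would be careful, is matching scales before invoking Gamma additivity. The scaling property gives $\sqrt{\alpha_t}\,\bar g_{t-1}\sim\Gamma(\bar k_{t-1},\sqrt{\alpha_t}\,\theta_{t-1})$, and the precise reason $\theta_t$ is defined as $\sqrt{\bar\alpha_t}\,\theta_0$ is that this makes $\sqrt{\alpha_t}\,\theta_{t-1}=\theta_t$, so the rescaled $\sqrt{\alpha_t}\,\bar g_{t-1}$ and the independent $g_t$ share the common scale $\theta_t$. Additivity of same-scale Gammas then yields $\bar g_t\sim\Gamma(\bar k_{t-1}+k_t,\theta_t)=\Gamma(\bar k_t,\theta_t)$, closing the induction. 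This scale-alignment is the main obstacle: without the chosen form of $\theta_t$, the sum of Gammas at time $t$ would have mismatched scales and would no longer lie in the Gamma family, so the closed-form collapse in equation~\ref{eq:lemma2_2} would fail.
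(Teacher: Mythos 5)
Your proof is correct and follows essentially the same route as the paper's: a direct moment computation for Eq.~\ref{eq:lemma2_1}, and an induction on $t$ for Eq.~\ref{eq:lemma2_2} whose inductive step rescales $\bar g_{t-1}$ by $\sqrt{\alpha_t}=\sqrt{1-\beta_t}$, uses the Gamma scaling property to align the scale to $\theta_t$, and then applies same-scale Gamma additivity together with linearity of expectation to collapse the centering terms. One caveat you share with the paper's own proof: with $k_t=\beta_t/(\alpha_t\theta_0^2)$ as stated, the algebra gives $k_t\theta_t^2=\beta_t\bar\alpha_{t-1}$ rather than $\beta_t$, so the variance identity in Eq.~\ref{eq:lemma2_1} holds exactly only under the variant $k_t=\beta_t/(\bar\alpha_t\theta_0^2)$ used in the experiments section.
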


The complete proof for Lemma \ref{lm:lm1} is given in Appendix~\ref{app:sec_1}.

Similarly to Eq.\ref{eq:DDPM_update} by using Langevin dynamics, the inference is given by:
\begin{equation}\label{eq:sample_gamma}
x_{t-1} = \dfrac{x_t - \frac{1 - \alpha_t}{\sqrt{1- \bar\alpha_t}}\varepsilon_\theta(x_t, t)}{\sqrt{\bar\alpha_t}} + \sigma_t \dfrac{\bar g_t - E(\bar g_t)}{\sqrt{V(\bar g_t)}}
\end{equation}

In Algorithm~\ref{alg:single_gamma_train} we describe the training procedure. As input we have the: (i) initial scale $\theta_0$, (ii) the dataset $d$, (iii) the maximum number of steps in the diffusion process $T$ and (iv) the noise schedule $\beta_1,...,\beta_T$. The training algorithm sample: (i) an example $x_0$, (ii) number of step $t$ and (iii) noise $\varepsilon$. Then it calculates $x_t$ from $x_0$ by using Eq.\ref{eq:close_form_single_gamma}. The neural network $\varepsilon_{\theta}$ has an input $x_t$ and is conditional on the time step $t$.
Next, it takes a gradient descent step to approximate the normalized noise $\frac{\bar g_t - \bar k_t \theta_t}{\sqrt{1 - | \bar \alpha_t|}}$ with the neural network $\varepsilon_{\theta}$. The main changes between Algorithm~\ref{alg:single_gamma_train} and the single Gaussian case (i.e. Alg.~\ref{alg:DDPM_train}) are the following: (i) calculating the Gamma parameters, (ii) $x_t$ update equation and (iii) the gradient update equation. 

The inference procedure is given in Algorithm~\ref{alg:single_gamma_infernce}. It starts from a zero mean noise $x_T$ sampled from $\Gamma(\theta_T, \bar k_T)$. Next, for $T$ steps the algorithm estimates $x_{t-1}$ from $x_t$ by using Eq.\ref{eq:sample_gamma}. Note that as in \citep{song_denoising_2020} $\sigma_t=\beta_t$. Algorithm~\ref{alg:single_gamma_infernce} replaces the  Gaussian version (i.e. Alg.~\ref{alg:DDPM_inf}) with the following: (i) the starting sampling point $x_T$, (ii) the sampling noise $z$ and (iii) the $x_t$ update equation.

\begin{figure}[!t]
\begin{minipage}[t]{0.49\textwidth}
\begin{algorithm}[H]
  \caption{Gamma Training Algorithm} 
  \label{alg:single_gamma_train}
  \begin{algorithmic}[1]
  \STATE Input: initial scale $\theta_0$, dataset $d$, diffusion process length $T$, noise schedule $\beta_1,...,\beta_T$
   \REPEAT
   \STATE $x_0 \sim d(x_0)$
   \STATE $t \sim \mathcal{U}(\{1, ..., T\})$
   \STATE $\bar g_t \sim \Gamma(\bar k_t, \theta_t)$
   \STATE $x_t = \sqrt{\bar \alpha_t}x_0 + (\bar g_t -\bar k_t\theta_t)$
   \STATE Take a gradient descent step on: \newline $\left | \frac{\bar g_t - \bar k_t \theta_t}{\sqrt{1 - | \bar \alpha_t|}} - \varepsilon_{\theta}(x_t, t)\right| $
   \UNTIL{converged}
  \end{algorithmic}
\end{algorithm}
\end{minipage}
\hfill
\begin{minipage}[t]{0.49\textwidth}
\begin{algorithm}[H]
  \caption{Gamma Inference Algorithm}%
  \label{alg:single_gamma_infernce}
  \begin{algorithmic}[1]
   \STATE $ \gamma \sim \Gamma(\theta_T, \bar k_T) $
   \STATE $x_T = \gamma - \theta_T*\bar k_T$
   \FOR{t = T, ..., 1}
      \STATE $x_{t-1} = \frac{x_t -\frac{1-\alpha_t}{\sqrt{1 - \bar\alpha_t}} \epsilon(x_t, t)}{\sqrt{\alpha_t}}$
    \IF{t > 1}
        \STATE $z \sim \Gamma(\theta_{t-1}, \bar k_{t-1})$
        \STATE $z = \frac{z - \theta_{t-1}\bar k_{t-1}}{\sqrt{(1- \bar \alpha_t)}}$
        \STATE $x_{t-1} = x_{t-1} + \sigma_t z$
    \ENDIF
    \ENDFOR
  \end{algorithmic}
\end{algorithm}
\end{minipage}
\end{figure}

{\subsubsection{The Reverse Process for DDGM}}
The reverse process $q(x_{t-1}|x_0,x_t)$ defines the underlying generation process. Therefore, in this section, we will obtain the reverse process for the Gamma denoising diffusion model. Furthermore, we will use the reverse process $q(x_{t-1}|x_0,x_t)$ to obtain the variational lower bound and the appropriate loss function for the Gamma distribution denoising diffusion model.

The reverse process is given by:
\begin{equation}
\label{eq:reverse_process}
    q(x_{t-1}| x_0, x_t) = q(x_t| x_{t-1}, x_0)\dfrac{q(x_{t-1}|x_0)}{q(x_t|x_0)} 
\end{equation}

Next, one can calculate each one of the three main components of the reverse process, i.e. (i) $q(x_t| x_{t-1}, x_0)$, (ii) $q(x_{t-1}|x_0)$ and (iii) $q(x_t|x_0)$.

Since $q$ is memoryless, $q(x_t| x_{t-1}, x_0)= q(x_t| x_{t-1})$. Therefore, the first component (i) of Eq.~\ref{eq:reverse_process} is the forward process. The forward process is given by: 
\begin{align}
    q(x_t|x_{t-1}) &= p(g_t = x_t - \sqrt{1-\beta_t}x_{t-1}+ k_t\theta_t) \\ 
    &= \dfrac{(x_t - \sqrt{1-\beta_t}x_{t-1}+ k_t\theta_t)^{k_t-1} e^{-(x_t - \sqrt{1-\beta_t}x_{t-1}+ k_t\theta_t)/\theta_t}}{\Gamma(k_t)\theta_t}
\end{align}

The second component of Eq.\ref{eq:reverse_process} is given by:
\begin{equation}
q(x_{t-1}|x_0)= \dfrac{(x_{t-1} - \sqrt{\bar\alpha_{t-1}}x_0 +\bar k_{t-1}\theta_{t-1})^{\bar k_{t-1}-1} e^{-(x_t - \sqrt{\bar\alpha_{t-1}}x_0 +\bar k_{t-1}\theta_{t-1})/\theta_{t-1}}}{\Gamma(\bar k_{t-1}) \theta_t^{\bar k_{t-1}}}
\end{equation}

Similarly, the third component of Eq.\ref{eq:reverse_process} is given by:
\begin{equation}
q(x_t|x_0) = p( \bar g_t = x_t - \sqrt{\bar\alpha_t}x_0 +\bar k_t\theta_t) = \dfrac{(x_t - \sqrt{\bar\alpha_t}x_0 +\bar k_t\theta_t)^{\bar k_t-1} e^{-(x_t - \sqrt{\bar\alpha_t}x_0 +\bar k_t\theta_t)/\theta_t}}{\Gamma(\bar k_t) \theta_t^{\bar k_t}}
\end{equation}

Overall, the reverse process $q(x_{t-1}| x_0, x_t)$ is given by:
\begin{align}
\begin{aligned}
q(x_{t-1}| x_0, x_t) &= \dfrac{\left( (x_t - \sqrt{1-\beta_t}x_{t-1}+ k_t\theta_t)^{k_t-1} e^{-(x_t - \sqrt{1-\beta_t}x_{t-1}+ k_t\theta_t)/\theta_t}\right)}{\Gamma(k_t)\theta_t} \\
& \cdot \dfrac{\left( (x_{t-1} - \sqrt{\bar\alpha_{t-1}}x_0 +\bar k_{t-1}\theta_{t-1})^{\bar k_{t-1}-1} e^{-(x_t - \sqrt{\bar\alpha_{t-1}}x_0 +\bar k_{t-1}\theta_{t-1})/\theta_{t-1}} \right)}{\Gamma(\bar k_{t-1}) \theta_t^{\bar k_{t-1}}} \\
& \cdot \dfrac{\Gamma(\bar k_t) \theta_t^{\bar k_t}}{\left( (x_t - \sqrt{\bar\alpha_t}x_0 +\bar k_t\theta_t)^{\bar k_t-1} e^{-(x_t - \sqrt{\bar\alpha_t}x_0 +\bar k_t\theta_t)/\theta_t} \right)}
\end{aligned}
\end{align}

One can denote:
\begin{enumerate}
    \item $X_t = x_t - \sqrt{1-\beta_t}x_{t-1}+ k_t\theta_t$
    \item $\bar X_t = x_t - \sqrt{\bar\alpha_t}x_0 +\bar k_t\theta_t$ 
    \item $\bar X_{t-1} = x_{t-1} - \sqrt{\bar\alpha_{t-1}}x_0 +\bar k_{t-1}\theta_{t-1}$
\end{enumerate}

Thus, the reverse process $q(x_{t-1}| x_0, x_t)$ is proportional to:
\begin{equation}
\label{eq:reverse_prop}
q(x_{t-1}| x_0, x_t) \propto \dfrac{X_t^{k_t-1} e^{-X_t/\theta_t} \bar X_{t-1}^{\bar k_{t-1}-1} e^{-\bar X_{t-1}/\theta_{t-1}}}{\bar X_t^{\bar k_t-1} e^{-\bar X_t/\theta_t}} 
\end{equation}

{\subsubsection{Variational Lower Bound for DDGM}} 
\label{sec:vlb}

Denoising diffusion models \citep{ho2020denoising} trained by optimizing the usual variational bound on negative log likelihood:
\begin{equation}
E\left [ -log(p_{\theta}(x_0) \right ] \leq E_q\bigg[ -\log p(x_T) - \sum_{t \geq 1} \log \frac{p_\theta(x_{t-1} | x_t)}{q(x_t|x_{t-1})} \bigg] = L_{VLB}    
\end{equation}

To get the variational lower bound for the proposed Gamma denoising diffusion model, one can use Eq.5 from \cite{ho2020denoising}:
\begin{equation}
L_{VLB} = E_{q} \left [ L_T + \sum_{t>1}L_{t-1} + L_0  \right ]
\end{equation}
where $L_T,L_{t-1}$ and $L_0$ define by:
\begin{enumerate}
\item $L_T = D_{KL}(q(x_{T}|x_0)||q(x_T))$ 
\item $L_{t-1} = D_{KL}(q(x_{t-1}| x_0, x_t)||q(x_{t-1} | \hat x_0, x_t))$
\item $L_0=-\log(q(x_0|x_1))$ 
\end{enumerate}

$L_T$ is constant and ignored during training since it doesn't have learnable parameters. Moreover, in \citep{ho2020denoising} $L_0$ modeled with discrete decoder, however, in our proposed model we empirically found that the impact $L_0$ is negligible and can be removed.

Therefore, to calculate the variatonal lower bound one needs to obtain:
\begin{equation}
L_{t-1} = D_{KL}(q(x_{t-1}| x_0, x_t)||q(x_{t-1} | \hat x_0, x_t))
\end{equation}

where:
\begin{equation}
\label{eq:hatx_0}
\hat x_0 = \dfrac{x_t -\sqrt{1 - \bar \alpha_t} \varepsilon_\theta(x_t, t)}{\sqrt{\bar \alpha_t}}
\end{equation}

We can calculate the KL divergence with the exact form:
\begin{equation}
\label{eq:kl_large}
D_{KL}(q(x_{t-1}| x_0, x_t)||q(x_{t-1} | \hat x_0, x_t)) = E_{q(x_{t-1}| x_0, x_t)} \log\left(\dfrac{q(x_{t-1}| x_0, x_t)}{q(x_{t-1}| \hat x_0, x_t)}\right)
\end{equation}

Using Eq.\ref{eq:reverse_prop} the RHS of Eq.\ref{eq:kl_large} become: 
\begin{equation}
\log \left(\dfrac{q(x_{t-1}| x_0, x_t)}{q(x_{t-1}| \hat x_0, x_t)}\right) =  (\bar k_{t-1}-1) \log(\frac{\bar X_{t-1}}{\hat X_{t-1}}) - \frac{\bar X_{t-1} - \hat X_{t-1}}{\theta_{t-1}} -(\bar k_t -1)\log(\frac{\bar X_t}{\hat X_t}) +\frac{\bar X_t - \hat X_t}{\theta_t}
\end{equation}
One can show that the four terms present in the previous equation can be upper bounded with the L1 distance between the predicted $\hat x_0$ and the ground truth $x_0$:
\begin{itemize}

\item $|\frac{\bar X_{t-1} - \hat X_{t-1}}{\theta_{t-1}}| = |(x_0 - \hat x_0) \frac{\sqrt{\bar \alpha_{t-1}}}{\theta_{t-1}}| \leq C_1 |x_0 - \hat x_0| $

\item $|\frac{\bar X_t - \hat X_t}{\theta_{t}}| = |(x_0 - \hat x_0) \frac{\sqrt{\bar \alpha_t}}{\theta_t}| \leq C_2 |x_0 - \hat x_0|$

\item $(\bar k_t-1) \log(\frac{\bar X_t}{\hat X_t}) = (\bar k_t-1)\log\left(\dfrac{x_t - \sqrt{\bar\alpha_t}x_0 +\bar k_t\theta_t}{x_t - \sqrt{\bar\alpha_t}\hat x_0 +\bar k_t\theta_t}\right) = \log \left( 1 + \dfrac{\sqrt{\bar \alpha_t}(x_0 - \hat x_0)}{x_t - \sqrt{\bar\alpha_t}\hat x_0 +\bar k_t\theta_t} \right) \leq |\dfrac{\sqrt{\bar \alpha_t}(x_0 - \hat x_0)}{x_t - \sqrt{\bar\alpha_t}\hat x_0 +\bar k_t\theta_t}| = \frac{C_3}{\bar g_t} |x_0 - \hat x_0|$

\item 
$(\bar k_{t-1}-1) \log(\frac{\bar X_{t-1}}{\hat X_{t-1}}) = \log \left( 1 + \dfrac{\sqrt{\bar \alpha_{t-1}}(x_0 - \hat x_0)}{x_{t-1} - \sqrt{\bar\alpha_{t-1}}\hat x_0 +\bar k_{t-1}\theta_{t-1}} \right)  \\ \leq |\dfrac{\sqrt{\bar \alpha_{t-1}}(x_0 - \hat x_0)}{x_{t-1} - \sqrt{\bar\alpha_{t-1}}\hat x_0 +\bar k_{t-1}\theta_{t-1}}| = \frac{C_4}{\bar g_{t-1}} |x_0 - \hat x_0|$

\end{itemize}

The complete form of the $L_{t-1}$ upper bound can be expressed as follows:
\begin{equation}
\label{eq:vlb_end}
L_{t-1} \leq E_{q(x_{t-1}| x_0, x_t)} \left( C_1 +C_2 + \frac{C_3}{\bar g_t} + \frac{C_4}{\bar g_{t-1}}\right) |x_0 - \hat x_0| = \left( C_1 +C_2 + \frac{C_3}{\bar g_t} + \frac{C_4}{\bar g_{t-1}}\right) |x_0 - \hat x_0|
\end{equation}

As can be seen, the variational lower bound is bounded by some constant forms multiplied by the L1 norm between the data point $x_0$ and its estimation $\hat x_0$. The constant terms $C_1,C_2,C_3$ and $C_4$ as well as $\bar g_t$ and $\bar g_{t-1}$ are known values during the training. 

{\subsubsection{Loss Function for DDGM}}

Denoising diffusion probabilistic models use the variational lower bound to minimize the negative log likelihood. As described in Sec.\ref{sec:vlb}, one can minimize the variational lower bound by $L_t$ for $t \geq 1$. To do so, one can minimize the L1 norm from Eq.\ref{eq:vlb_end}. Our model optimizes the L1 norm between the sampled noise $\epsilon_\theta$ and the estimated noise $\varepsilon_\theta$. This is verified in the following lemmas.
\begin{restatable}{lem}{primelemmasec}
\label{lm:lm2}
    Minimizing the variational lower bound for DDGM (i.e. $L_t$ for $t \geq 1$) is equivalent to minimizing the L1 norm between the sampled noise and the estimated noise:
    \begin{equation}
        \mathcal{L} = \left |\dfrac{\bar g_t - \bar k_t \theta_t}{\sqrt{1 - \bar \alpha_t}} - \varepsilon_\theta(x_t, t)\right|
    \end{equation}
\end{restatable}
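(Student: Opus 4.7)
}
The plan is to reduce the upper bound on $L_{t-1}$ derived in Eq.~\ref{eq:vlb_end} to the claimed $L_1$ noise-matching loss, by expressing $x_0 - \hat x_0$ in terms of the noise residual $\frac{\bar g_t - \bar k_t \theta_t}{\sqrt{1 - \bar\alpha_t}} - \varepsilon_\theta(x_t, t)$. Concretely, I will (i) solve the forward closed-form identity from Eq.~\ref{eq:close_form_single_gamma} for $x_0$ in terms of $x_t$ and $\bar g_t - \bar k_t \theta_t$; (ii) recall from Eq.~\ref{eq:hatx_0} the corresponding expression for $\hat x_0$ in terms of $x_t$ and $\varepsilon_\theta(x_t,t)$; (iii) subtract these two identities and observe that the $x_t$ terms cancel, leaving a clean proportionality between $|x_0 - \hat x_0|$ and the target $L_1$ noise error.

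First I would write
\begin{equation*}
x_0 \;=\; \frac{x_t - (\bar g_t - \bar k_t \theta_t)}{\sqrt{\bar\alpha_t}}, \qquad
\hat x_0 \;=\; \frac{x_t - \sqrt{1 - \bar\alpha_t}\,\varepsilon_\theta(x_t, t)}{\sqrt{\bar\alpha_t}},
\end{equation*}
and subtract to obtain
\begin{equation*}
x_0 - \hat x_0 \;=\; \frac{\sqrt{1 - \bar\alpha_t}\,\varepsilon_\theta(x_t,t) - (\bar g_t - \bar k_t \theta_t)}{\sqrt{\bar\alpha_t}} \;=\; -\,\frac{\sqrt{1 - \bar\alpha_t}}{\sqrt{\bar\alpha_t}} \left(\frac{\bar g_t - \bar k_t \theta_t}{\sqrt{1 - \bar\alpha_t}} - \varepsilon_\theta(x_t,t)\right).
\end{equation*}
Taking absolute values gives $|x_0 - \hat x_0| = \frac{\sqrt{1-\bar\alpha_t}}{\sqrt{\bar\alpha_t}}\,\mathcal{L}$, where $\mathcal{L}$ is the claimed loss.

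Plugging this into the bound of Eq.~\ref{eq:vlb_end} yields
\begin{equation*}
L_{t-1} \;\leq\; \frac{\sqrt{1 - \bar\alpha_t}}{\sqrt{\bar\alpha_t}}\left(C_1 + C_2 + \frac{C_3}{\bar g_t} + \frac{C_4}{\bar g_{t-1}}\right) \mathcal{L}.
\end{equation*}
The coefficient in front of $\mathcal{L}$ depends only on the noise schedule, the hyperparameter $\theta_0$, and the sampled quantities $\bar g_t,\bar g_{t-1}$ — none of which depend on the learnable parameters $\theta$. Hence, as a function of $\theta$, the surrogate upper bound on the VLB is a positive multiple of $\mathcal{L}$, so its minimizer and that of $\mathcal{L}$ coincide; summing over $t \geq 1$ preserves this equivalence term by term.

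The main subtlety, and the step I expect to need the most care, is the quantifier structure: we are not claiming equality of $L_{t-1}$ and $\mathcal{L}$, only that their minimizers agree, and this works only because (a) the multiplicative factor in front of $\mathcal{L}$ is strictly positive and $\theta$-independent for every sample, and (b) the $L_T$ term contains no learnable parameters while the $L_0$ term has been argued to be empirically negligible. I would therefore make these two points explicit before concluding that minimizing $L_{\mathrm{VLB}}$ over $\theta$ reduces to minimizing $\mathbb{E}[\mathcal{L}]$.
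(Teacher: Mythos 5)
Your proposal is correct and follows essentially the same route as the paper's proof: both substitute the closed form of Eq.~\ref{eq:close_form_single_gamma} and the definition of $\hat x_0$ in Eq.~\ref{eq:hatx_0} into the bound of Eq.~\ref{eq:vlb_end}, cancel the $x_t$ terms to get $|x_0 - \hat x_0| = \frac{\sqrt{1-\bar\alpha_t}}{\sqrt{\bar\alpha_t}}\,\mathcal{L}$, and then drop the positive, $\theta$-independent prefactor. Your extra remarks on the quantifier structure and on the treatment of $L_T$ and $L_0$ are a slightly more careful framing of the same argument, not a different proof.
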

The complete proof for Lemma \ref{lm:lm2} is given in Sec.\ref{app:sec_2} at the appendix. Thus, the loss that is used in the Alg.\ref{alg:single_gamma_train} is given by $\mathcal{L} =\left | \dfrac{\bar g_t - \bar k_t \theta_t}{\sqrt{1 - \bar \alpha_t}} - \varepsilon_\theta(x_t, t)\right|$.

\section{Experiments} \label{sec:Exp}

\subsection{Speech Generation} 
For our speech experiments we used a version of Wavegrad \citep{chen_wavegrad_2020} based on this implementation \cite{ivangit} (under BSD-3-Clause License). We evaluate our model with high-level perceptual quality of speech measurements, PESQ \citep{PESQ_paper} and STOI \citep{STOI_paper}. We used the standard Wavegrad method with the Gaussian diffusion process as a baseline. We use two Nvidia Volta V100 GPUs to train our models.

For all the experiments, the inference noise schedules ($\beta_0, .., \beta_T$) were defined as described in the Wavegrad paper \citep{chen_wavegrad_2020}. For $1000$ and $100$ iterations the noise schedule is linear, for $25$ iterations it comes from the Fibonacci and for $6$ iterations we performed a model-dependent grid search to find the best noise schedule parameters. For other hyper-parameters (e.g. learning rate, batch size, etc) we use the same as in Wavegrad~\citep{chen_wavegrad_2020}. Training was performed using the following form of Eq.~\ref{eq:gamma_t}, e.g. $\theta_t = \sqrt{\bar \alpha_t} \theta_0$ and $k_t = \dfrac{\beta_t}{\bar\alpha_t{\theta_0}^2}$. Our best results were obtained using $\theta_0 = 0.001$. 

\noindent{\bf Results\quad }
Tab.~\ref{tab:LJ} presents the PESQ and STOI measurement for the LJ dataset \citep{ljspeech17}. As can be seen, for the proposed Gamma denoising diffusion model our results are better than the Wavegrad baseline for all number of iterations in both PESQ and STOI.

\begin{table}[]
\centering
\caption{PESQ and STOI metrics for the LJ dataset for various Wavegrad-like models.}
\label{tab:LJ}
\smallskip
\begin{tabular}{@{}l@{~~}c@{~~}c@{~~}c@{~~}c@{~~}c@{~~}c@{~~}c@{~~}c@{~~}c@{~~}c@{~~}c@{~~}c@{}}
\toprule
& \multicolumn{4}{c}{PESQ ($\uparrow$)} & \multicolumn{4}{c}{STOI ($\uparrow$) } \\
\cmidrule(lr){2-5}
\cmidrule(lr){6-9}
Model $\setminus$ Iteration & 6    & 25    & 100   & 1000  & 6    & 25    & 100   & 1000 \\
\midrule 
WaveGrad \citep{chen_wavegrad_2020} & 2.78 & 3.194 & 3.211 & 3.290 & 0.924 & 0.957 & 0.958 & 0.959  \\
DDGM (ours)    & \textbf{3.07} & \textbf{3.208} & \textbf{3.214} & \textbf{3.308} & \textbf{0.948} & \textbf{0.972}  & \textbf{0.969} & \textbf{0.969} \\
\bottomrule
\end{tabular}
\smallskip
\centering
\caption{FID ($\downarrow$) score comparison for CelebA(64x64) dataset. Lower is better.}
\smallskip
\label{tab:celeba}
\begin{tabular}{lccccc}
\toprule
Model $\setminus$ Iteration & 10     & 20     & 50    & 100   & 1000      \\
\midrule 
DDPM \citep{ho2020denoising}   & 299.71 & 183.83 & 71.71 & 45.2  & \textbf{3.26}    \\
DDGM - Gamma Distribution DDPM (ours)   & \textbf{35.59}  & \textbf{28.24}  & \textbf{20.24} & \textbf{14.22} & 4.09   \\
\midrule 
DDIM  \citep{song_denoising_2020}   & 17.33  & 13.73  & 9.17  & 6.53  & 3.51      \\
DDGM - Gamma Distribution DDIM (ours)    & \textbf{11.64}  & \textbf{6.83}   & \textbf{4.28}   & \textbf{3.17}  & \textbf{2.92}      \\
\bottomrule
\end{tabular}
\smallskip
\centering
\caption{FID ($\downarrow$) score comparison for LSUN Church (256x256) dataset. Lower is better.}
\label{tab:church}
\smallskip
\begin{tabular}{lcccc}
\toprule
Model $\setminus$ Iteration & 10     & 20     & 50    & 100   \\ 
\midrule 
DDPM \citep{ho2020denoising}     &  51.56 &	23.37 &	11.16 &	8.27 \\ 
DDGM - Gamma Distribution DDPM (ours)      &  \textbf{28.56} &	\textbf{19.68} &	\textbf{10.53} &	\textbf{7.87} 
\\
\midrule 
DDIM  \citep{song_denoising_2020}     &  19.45 &	12.47 &	10.84	& 10.58 \\ 
DDGM - Gamma Distribution DDIM (ours)    &  \textbf{18.11} &	\textbf{11.32} &	\textbf{10.31} &	\textbf{8.75}  \\ 
\bottomrule
\end{tabular}
\end{table}

\subsection{Image Generation} 
Our model is based on the DDIM implementation available in \citep{DDIMgithub} (under the MIT license).  We trained our model on two image datasets (i) CelebA 64x64 \citep{liu2015faceattributes} and (ii) LSUN Church 256x256 \citep{yu15lsun}. The Fréchet Inception Distance (FID)~\citep{heusel2017gans} is used as the benchmark metric. For all experiments, similarly to previous work~\citep{song_denoising_2020}, we compute the FID score with $50,000$ generated images, using the torch-fidelity implementation~\citep{torchfidelity}. Similar to \citep{song_denoising_2020}, the training noise schedule $\beta_1, ... ,\beta_T$ is linear with values raging from $0.0001$ to $0.02$. For other hyperparameters (e.g. learning rate, batch size etc) we use the same parameters that appear in DDPM \citep{ho2020denoising}. We use eight Nvidia Volta V100 GPUs to train our models. The $\theta_0$ parameter for Gamma distribution set to $0.001$.

\noindent{\bf Results\quad}
We test our models with the inference procedure from DDPM \citep{ho2020denoising} and DDIM \citep{song_denoising_2020}. In Tab.~\ref{tab:celeba} we provide the FID score for CelebA (64x64) dataset \citep{liu2015faceattributes} (under non-commercial research purposes license). As can be seen for DDPM inference procedure for $10,20,50,100$ steps, the best results were obtained from the Gamma model, which improves results by a gap of $264$ FID scores for ten iterations. For $100$ iterations, the Gamma model improves results by $31$ FID scores. For $1000$ iterations, the best results were obtained from the DDPM model. Nevertheless, our Gamma model obtains results that are closer to the DDPM by a gap of $0.83$. For the DDIM procedure, the best results were obtained with the Gamma model for all number of iterations. Fig.~\ref{fig:evolution_celebA} presents samples generated by the three models. Our models provide better quality images when compared to DDPM and DDIM methods.

In Tab.~\ref{tab:church} we provide the FID score for the LSUN church dataset \citep{yu15lsun}.  As can be seen, the Gamma model improves results over the baseline for $10,20,50,100$ iterations.

\begin{figure}
\centering
\begin{tabular}{c}
\includegraphics[width=\linewidth]{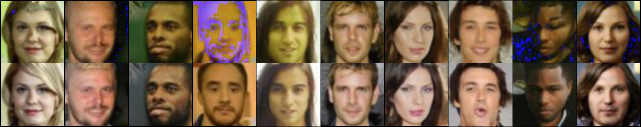}\\ 
\end{tabular}
\smallskip
\caption{Typical examples of images generated with $100$ iterations and $\eta=0$. For models trained with different noise distributions - (i) First row - Gaussian noise and (ii) Second row - Gamma noise. All models start from the same noise instance.}
\label{fig:evolution_celebA}
\end{figure}

\section{Conclusions} 
We present a novel Gamma diffusion model. The model employs a Gamma noise distribution. A key enabler for using these distributions is a closed-form formulation (Eq.~\ref{eq:close_form_single_gamma}) of the multi-step noising process, which allows for efficient training. We also present the reverse process and the variational lower bound for the Gamma diffusion model. The proposed model improves the quality of generated image and audio, as well as the speed of generation in comparison to conventional, Gaussian-based diffusion processes.

\subsubsection*{Acknowledgments}
This project has received funding from the European Research Council (ERC) under the European Unions Horizon 2020 research and innovation programme (grant ERC CoG 725974). The contribution of Eliya Nachmani is part of a Ph.D. thesis research conducted at Tel Aviv University.

\bibliography{iclr2022_conference}
\bibliographystyle{iclr2022_conference}

\appendix
\section{Proofs}

\subsection{Proof of lemma~\ref{lm:lm1}} 
\label{app:sec_1}
\primelemma*

\begin{proof}
    
    The first part of Eq.~\ref{eq:lemma2_1} is immediate. The variance part is also straightforward:
    \begin{equation*}
        V(g_t - E(g_t))= k_t{\theta_t}^2 = \beta_t
    \end{equation*}

    Eq.~\ref{eq:lemma2_2} is proved by induction on $t\in \{1, ...T\}$. 
    For $t=1$:
    \begin{equation*}
        x_1 = \sqrt{1-\beta_1}x_0 + g_1 - E(g_1)
    \end{equation*}

    since $\bar k_1 = k_1$, $\bar g_1 = g_1$. We also have that $\sqrt{1-\beta_1} = \sqrt{\bar \alpha_1}$. Thus we have:

\begin{equation*}
    x_1 = \sqrt{\bar \alpha_1}x_0 + (\bar g_1 - E(\bar g_1))
\end{equation*}

Assume Eq.~\ref{eq:lemma2_2} holds for some $t\in \{1, ...T\}$. The next iteration is obtained as
\begin{align}
    x_{t+1} &= \sqrt{1- \beta_{t+1}} x_t + g_{t+1} - E(g_{t+1})\\
           & = \sqrt{1- \beta_{t+1}} (\sqrt{\bar \alpha_t}x_0 + (\bar g_t - E(\bar g_t))) + g_{t+1} - E(g_{t+1}) \\
           & = \sqrt{\bar \alpha_{t+1}}x_0 + \sqrt{1- \beta_{t+1}} \bar g_t + g_{t+1} - (\sqrt{1- \beta_{t+1}}E(\bar g_t) + E(g_{t+1}))
\end{align}

It remains to be proven that (i) $\sqrt{1- \beta_{t+1}} \bar g_t + g_{t+1} = \bar g_{t+1}$ and (ii) $\sqrt{1- \beta_{t+1}}E(\bar g_t) + E(g_{t+1}) = E(\bar g_{t+1})$. Since $\bar g_t \sim \Gamma(\bar k_t, \theta_t)$ hold, then:
\begin{align*}
    \sqrt{1- \beta_{t+1}} \bar g_t & \sim \Gamma(\bar k_t, \sqrt{1- \beta_{t+1}}\theta_t) = \Gamma(\bar k_t, \theta_{t+1}) 
\end{align*}
Therefore, we prove (i):
\begin{align*}
    \sqrt{1- \beta_{t+1}} \bar g_t + g_{t+1} &\sim \Gamma(\bar k_t+ k_{t+1}, \theta_{t+1}) = \Gamma(\bar k_{t+1}, \theta_{t+1})
\end{align*}
which implies that $\sqrt{1- \beta_{t+1}} \bar g_t + g_{t+1}$ and $\bar g_{t+1}$ have the same probability distribution.

Furthermore, by the linearity of the expectation, one can obtain (ii):
\begin{align*}
     \sqrt{1- \beta_{t+1}}E(\bar g_t) + E(g_{t+1}) &=  E(\sqrt{1- \beta_{t+1}} \bar g_t + g_{t+1}) \\
    &= E(\bar g_{t+1})
\end{align*}

Thus, we have:
\begin{equation*}
    x_{t+1} = \sqrt{\bar \alpha_{t+1}}x_0 + (\bar g_{t+1} - E(\bar g_{t+1}))
\end{equation*}
which ends the proof by induction.
\end{proof}

\subsection{Proof of lemma~\ref{lm:lm2}} 
\label{app:sec_2}
\primelemmasec*
\begin{proof}
From Eq.\ref{eq:vlb_end}, the variational lower bound of DDGM is given by $L_{t-1} \leq \left( C_1 +C_2 + \frac{C_3}{\bar g_t} + \frac{C_4}{\bar g_{t-1}}\right) |x_0 - \hat x_0|$. Substitute Eq.\ref{eq:hatx_0} and Eq.\ref{eq:close_form_single_gamma} to the variational lower bound we have:
\begin{align}
    L_{t-1} &\leq \left( C_1 +C_2 + \frac{C_3}{\bar g_t} + \frac{C_4}{\bar g_{t-1}}\right) \left |x_0 - \hat x_0\right| \\
    &= \left( C_1 +C_2 + \frac{C_3}{\bar g_t} + \frac{C_4}{\bar g_{t-1}}\right) \left | x_0 - \dfrac{x_t -\sqrt{1 - \bar \alpha_t} \varepsilon_\theta(x_t, t)}{\sqrt{\bar \alpha_t}}\right| \\
    &= \left( C_1 +C_2 + \frac{C_3}{\bar g_t} + \frac{C_4}{\bar g_{t-1}}\right) \frac{1}{\sqrt{\bar \alpha_t}} \left | \sqrt{\bar \alpha_t}x_0 - x_t +\sqrt{1 - \bar \alpha_t} \varepsilon_\theta(x_t, t)\right| \\
    &= \left( C_1 +C_2 + \frac{C_3}{\bar g_t} + \frac{C_4}{\bar g_{t-1}}\right) \frac{1}{\sqrt{\bar \alpha_t}} \left | \sqrt{\bar \alpha_t}x_0 - \sqrt{\bar \alpha_t} x_0 - (\bar g_t - \bar k_t\theta_t) +\sqrt{1 - \bar \alpha_t} \varepsilon_\theta(x_t, t) \right| \\
    &= \left( C_1 +C_2 + \frac{C_3}{\bar g_t} + \frac{C_4}{\bar g_{t-1}}\right) \frac{1}{\sqrt{\bar \alpha_t}}  \left |(\bar g_t - \bar k_t\theta_t) - \sqrt{1 - \bar \alpha_t} \varepsilon_\theta(x_t, t) \right| \\
    &= \left( C_1 +C_2 + \frac{C_3}{\bar g_t} + \frac{C_4}{\bar g_{t-1}}\right) \frac{\sqrt{1 - \bar \alpha_t}}{\sqrt{\bar \alpha_t}} \left |\frac{\bar g_t - \bar k_t\theta_t}{\sqrt{1 - \bar \alpha_t}} - \varepsilon_\theta(x_t, t) \right|
\end{align}

Since we are minimizing the variational lower bound, one can drop the constant term $\left( C_1 +C_2 + \frac{C_3}{\bar g_t} + \frac{C_4}{\bar g_{t-1}}\right) \frac{\sqrt{1 - \bar \alpha_t}}{\sqrt{\bar \alpha_t}}$. Therefore, minimizing the variational lower bound is equal to minimizing the term $\left |\frac{\bar g_t - \bar k_t\theta_t}{\sqrt{1 - \bar \alpha_t}} - \varepsilon_\theta(x_t, t) \right|$.
\end{proof}
\end{document}